\title{
Space-time tradeoff   for sparse quantum state preparation
}
\author[1]{Jingquan Luo\thanks{luojq25@mail2.sysu.edu.cn}}
\author[1]{Guanzhong Li\thanks{ligzh9@mail2.sysu.edu.cn}}
\author[1, 2]{Lvzhou Li\thanks{Corresponding author: lilvzh@mail.sysu.edu.cn}}
\affil[1]{Institute of Quantum Computing and and Software, School of Computer Science and Engineering, Sun Yat-sen University, Guangzhou 510006, China}
\affil[2]{Quantum Science Center of Guangdong-Hong Kong-Macao Greater Bay Area, Shenzhen 518045, China}
\date{\today }
\begin{document}
\maketitle

\begin{abstract}

In this work, we investigate   the trade-off between the circuit depth and the number of ancillary qubits for preparing  sparse quantum states. We prove that any $n$-qubit $d$-spare quantum state (i.e., it has only $d$ non-zero amplitudes) can be prepared by a quantum   circuit with depth 
$O\left(\frac{nd \log m}{m \log m/n} + \log nd\right)$ using  $m\geq 6n$   ancillary qubits, which achieves the current best  trade-off between depth and ancilla number. In particular, when $m = \ta{\frac{nd}{\log d}}$, our result recovers the optimal  circuit depth $\Theta(\log nd)$  given in \hyperlink{cite.zhang2022quantum}{[Phys. Rev. Lett., 129, 230504(2022)]}, but using significantly fewer gates and ancillary qubits. 

\end{abstract}

\section{Introduction}

Quantum state preparation (QSP) is a fundamental task in quantum computing, enabling algorithms to initialize input states with prescribed amplitude distributions and playing a crucial role in various applications such as Hamiltonian simulation \cite{childs2018toward, low2017optimal, low2019hamiltonian, berry2015simulating} and quantum machine learning \cite{lloyd2014quantum, kerenidis2016quantum, kerenidis2019q, kerenidis2021quantum, rebentrost2014quantum}. Formally, given a normalized complex vector $\{ \alpha_i \}_{0 \le i < 2^n}$ representing an $n$-qubit quantum state, and an error bound $\varepsilon \in [0, 1)$, the goal is to construct a resource-efficient quantum circuit $U$ acting on $n + m$ qubits  such that
\begin{align}
U \ket{0}^{\otimes n} \ket{0}^{\otimes m} = \ket{\psi} \quad \text{and} \quad 
\left\| \ket{\psi} - \sum_{i=0}^{2^n - 1} \alpha_i \ket{i} \ket{0}^{\otimes m} \right\| \le \varepsilon, \label{eq:approx-qsp}
\end{align}
where $\ket{0}^{\otimes m}$ are ancillary qubits used to assist the preparation. When $\varepsilon = 0$, this task reduces to \emph{exact quantum state preparation}, where the output state must match the target amplitudes precisely.  Various standards can be employed to gauge the efficiency of a quantum circuit, including size (the gate-count), depth (the number of circuit layers  which corresponds to the  time for executing
the quantum circuit), space (the number of ancillary qubits), and more. Considering the  higher implementation cost of  some specific gates such  the CNOT-gate and T-gate, the count of them is also utilized as a measure of efficiency.


Approximate quantum state preparation ($\varepsilon > 0$) has been extensively studied, mainly driven by two considerations. 
First, in fault-tolerant quantum computation, one typically works with a finite universal gate set such as Clifford+$T$, which renders the exact preparation of arbitrary amplitudes infeasible.
A sequence of works has focused on this setting~\cite{shende2005synthesis, sun2023asymptotically, zhang2022quantum, zhang2024circuit, zhang2024practical, gui2024spacetime, low2024trading, gosset2024quantum}. In particular, the best known circuit size is $\bo{2^n \log \frac{1}{\varepsilon}}$~\cite{zhang2022quantum, zhang2024circuit, zhang2024practical} and the circuit depth has been reduced to the asymptotically optimal bound of $\ta{n + \log \frac{1}{\varepsilon}}$~\cite{zhang2024practical, gui2024spacetime}. Moreover, the $T$-count has also been shown to be asymptotically optimal at $\ta{\sqrt{2^n \log \frac{1}{\varepsilon}} + \log \frac{1}{\varepsilon}}$~\cite{gosset2024quantum}.
Second, in many practical scenarios, the target state possesses intrinsic structure---for example, the amplitudes may be defined analytically, such as by a probability density function. In such cases, one can exploit this structure to design efficient approximate state preparation algorithms using matrix product states (MPS)~\cite{holmes2020efficient, melnikov2023quantum, iaconis2024quantum, gonzalez2024efficient, manabe2025state, gundlapalli2022deterministic}, quantum singular value transformation (QSVT)~\cite{gonzalez2024efficient, mcardle2022quantum}, and series expansion techniques~\cite{rosenkranz2025quantum, zylberman2024efficient, moosa2023linear}. Another prominent line of research leverages variational circuits~\cite{marin2023quantum, nakaji2022approximate, zoufal2019quantum}, which optimize parameterized quantum circuits to approximate the target state. 
Although these methods could be effective in practice, they generally lack rigorous complexity guarantees and often require problem-specific analysis. In addition, several works have considered the oracle-access model, where the amplitude information is not given analytically but instead retrieved via a unitary oracle~\cite{sanders2019black, bausch2022fast, wang2021fast, rattew2022preparing}.

In this work, we focus on the problem of exact quantum state preparation, corresponding to the case $\varepsilon = 0$ in Eq.~\eqref{eq:approx-qsp}. We consider a standard gate set consisting of arbitrary single-qubit gates and CNOT gates, which is commonly used in circuit synthesis~\cite{shende2005synthesis, bergholm2005quantum}. 
A sequence of works has established that the optimal circuit size is $\Theta(2^n)$~\cite{grover2002creating,shende2005synthesis,bergholm2005quantum,plesch2011quantum,iten2016quantum}. Several studies have focused on reducing circuit depth by leveraging ancillary qubits~\cite{sun2023asymptotically, yuan2023optimal, zhang2022quantum, rosenthal2021query}, and have established the asymptotically optimal depth-ancilla trade-off of $\Theta\left(n + \frac{2^n}{n + m}\right)$ when $m$ ancillary qubits are available~\cite{sun2023asymptotically, yuan2023optimal}. These results characterize the fundamental limits of exact state preparation from a worst-case perspective.

When the target state contains at most $d \ll 2^n$ nonzero amplitudes, the task is referred to as \emph{sparse quantum state preparation} (SQSP).
Formally, given a set of amplitude-basis pairs $\{(\alpha_i, q_i)\}_{i=0}^{d-1}$ with $\alpha_i \in \mathbb{C}$, $q_i \in \{0,1\}^n$, and $\sum_i |\alpha_i|^2 = 1$, the goal is to construct a quantum circuit $U$ such that
\begin{align}
U \ket{0}^{\otimes n} \ket{0}^{\otimes m} = \sum_{i=0}^{d-1} \alpha_i \ket{q_i} \ket{0}^{\otimes m}. \label{eq:sqsp}
\end{align}
This setting arises naturally in quantum algorithms where the input data is sparse. For example, SQSP plays a crucial role in constructing the block-encoding of a sparse matrix \cite{clader2022quantum, zhang2024circuit}, which is typically treated as a black box in quantum linear systems solvers~\cite{harrow2009quantum, costa2022optimal} and quantum singular-value transformation~\cite{martyn2021grand, gilyen2019quantum}. Several recent works have focused on optimizing the circuit size for SQSP~\cite{gleinig2021efficient, malvetti2021quantum, ramacciotti2023simple, mozafari2022efficient, de2020circuit, de2022double, mao2024towards, luo2024circuit}. In particular, Ref. \cite{luo2024circuit} established a nearly optimal  circuit size of $O({\frac{nd}{\log (n + m)} + n})$ when $m$ ancillary qubits are available.
We note that some recent works have explored the use of intermediate measurements to reduce the depth of (sparse) quantum state preparation circuits~\cite{yeo2025reducing, buhrman2024state, zi2025constant}. However, such approaches fall outside the scope of this paper.

In this paper, we investigate the circuit depth  of exact sparse quantum state preparation, with particular attention to the trade-off between the depth and the number of ancillary qubits. Under current experimental constraints, quantum coherence time is limited, making circuit depth a critical bottleneck. Meanwhile, the number of available qubits continues to grow, motivating the exploration of space-time trade-offs in quantum circuit design.
To address this challenge, we develop new circuit construction for SQSP that achieves lower circuit depth by utilizing additional ancillary qubits. Our results advance the theoretical understanding of SQSP and contribute toward making it more practical for near-term quantum devices.

\subsection{Contributions}
In this paper, we aim to reduce the circuit depth for SQSP with the help of ancillary qubits, achieving a trade-off between the circuit depth and the number of ancillary qubits, as follows:
\begin{theorem}\label{thm:tradeoff}
Any $n$-qubit $d$-sparse quantum state can be prepared by a circuit of depth $\bo{\frac{nd\log m}{m\log m/n}+\log nd}$ and size $\bo{\frac{nd}{\log m/n} + \frac{nd}{\log d}}$, by using  $m \geq 6n$ ancillary qubits.
\end{theorem}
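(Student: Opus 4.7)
The plan is to split the preparation into two phases and exploit the $m$ ancillae to parallelize each. In the \emph{amplitude-loading phase} I would prepare the small index-register state $\sum_{i=0}^{d-1}\alpha_i\ket{i}$ on $\ceil{\log d}$ qubits. Since this is a dense state preparation on a $\log d$-qubit register, the optimal depth--ancilla tradeoff of~\cite{sun2023asymptotically,yuan2023optimal} gives depth $\bo{\log d+d/m}$ and size $\bo{d}$, both of which are absorbed by the claimed totals. In the \emph{address-routing phase} I would apply a sparse unitary $W$ satisfying $W\ket{i}\ket{0^n}=\ket{0}^{\otimes\ceil{\log d}}\ket{q_i}$ for every $i\in\{0,\dots,d-1\}$; because the $q_i$'s are distinct, the index $i$ is uniquely recoverable from $q_i$, so the index register is cleanly uncomputed into the output $\sum_i\alpha_i\ket{q_i}$. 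The technical bulk lies in realising $W$ efficiently.

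To build $W$ with the claimed size, I would partition the $d$ indices into $\ceil{d/k}$ blocks of $k=\ta{\log(m/n)}$ consecutive indices each. Within a block the $k$ indices share a common high-order prefix, so a constant-depth selector controlled on the $\log k$ low-order bits can XOR the correct $n$-bit address into the output register using a select-swap / QROM-style compression of size $\bo{nk/\log k}=\bo{n}$ on $\bo{n}$ working qubits. Summing over the $d/k=\bo{d/\log(m/n)}$ blocks gives total size $\bo{nd/\log(m/n)}$, and an additional $\bo{nd/\log d}$ budget from the amplitude phase and the outer binary-tree routing yields the claimed size bound.

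Parallelization now uses the $m\ge 6n$ ancillae to run roughly $m/n$ blocks simultaneously. Broadcasting the index register to all parallel workers through a fan-out tree costs depth $\bo{\log m}$, and collecting their partial outputs into the target register through a fan-in XOR tree costs another $\bo{\log m}$; these trees are reused across rounds, so the per-round overhead is $\bo{\log m}$. Dividing the size $\bo{nd/\log(m/n)}$ by the parallelism $m/n$ and multiplying by this per-round overhead yields depth $\bo{nd\log m/(m\log(m/n))}$ for the routing phase, and adding the $\bo{\log nd}$ root cost of the fan-out/fan-in trees plus the amplitude phase produces the claimed depth.

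The main obstacle is making the size-optimal intra-block construction mesh with the depth-optimal inter-block parallelization: a naive \emph{broadcast indices, let each worker write, XOR into output} scheme either loses the $\log(m/n)$ compression inside the blocks (inflating the size) or introduces extra logarithmic factors from uncomputing per-worker scratch (inflating the depth). The key technical lemma to establish is that the compressed lookup blocks admit a clean in-place uncomputation pattern so that ancillae can be reused across rounds, making the $\log m/\log(m/n)$ slowdown the correct amortized cost of parallelizing them.
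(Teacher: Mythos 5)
Your skeleton coincides with the paper's: first prepare $\sum_i\alpha_i\ket{i}$ on $\lceil\log d\rceil$ qubits, then realize a routing map $\ket{i}\mapsto\ket{q_i}$ in batches, parallelize the batches with fan-out/fan-in trees over the $m$ ancillae, and the final bookkeeping (size divided by parallelism $m/n$, times a $\log m$ per-round overhead) reproduces the claimed bounds. The gap is that the entire content of the theorem sits inside the step you leave as ``the key technical lemma to establish.'' Concretely, your claim that a block of $k=\Theta(\log(m/n))$ addresses can be loaded by a select-swap circuit of size $O(nk/\log k)=O(n)$ on $O(n)$ working qubits fails twice: arithmetically, $nk/\log k=n\log(m/n)/\log\log(m/n)=\omega(n)$; and information-theoretically, a circuit of $O(n)$ gates confined to $O(n)$ qubits can encode only $O(n\log n)$ bits of classical data, whereas the block carries $nk=\Theta(n\log(m/n))$ bits of address information, so for $m$ super-polynomial in $n$ no such circuit exists. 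The ``constant-depth selector controlled on $\log k$ bits'' is likewise not constant depth, and ``$i$ is recoverable from $q_i$, so the index register is cleanly uncomputed'' is a statement about invertibility, not a circuit: that uncomputation is itself a $d$-entry lookup and, done naively with one $n$-controlled Toffoli per index, already costs $\Theta(nd)$ gates, destroying the $1/\log(m/n)$ savings.

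The paper's mechanism that closes all three holes simultaneously is the $(n,r)$-unary intermediate encoding (\cref{def:nr}, \cref{lemma:u2nr}) with $r=\Theta(\log(m/n))$: each address $q_i$ is written as $n/r$ one-hot registers of width $2^r$, so writing one address costs only $n/r$ CNOTs because $r$ bits of address data are encoded in the \emph{position} of each CNOT's target inside a $\Theta(m)$-qubit one-hot workspace (this is exactly where the extra $\log(m/n)$ factor of information per gate comes from, and why it cannot live on $O(n)$ working qubits); a whole batch of $k$ indices is written in depth $O(\log nk)$ by fanning out a one-hot index register and using balanced CNOT trees; the index register is uncomputed by $(n/r+1)$-controlled Toffolis that read the one-hot address registers, again at cost $n/r$ controls per index; and a single unary-to-binary conversion of size $O(n2^r/r)=O(m)$ and depth $O(r)$, applied once at the very end, turns the intermediate state into $\sum_i\alpha_i\ket{q_i}$. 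Without this (or an equivalent) encoding, your per-block size is $\Omega(nk)$ and the theorem degrades to the known $O(nd)$-size bounds, so the proposal as written does not establish the stated trade-off.
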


There are few related works  on circuit depth for SQSP \cite{sun2023asymptotically, zhang2024circuit, zhang2022quantum}.  Ref.~\cite{sun2023asymptotically} primarily investigated the circuit depth for general quantum state preparation and mentioned that their method can be extended to the sparse case. In the full version~\cite{sun2021asymptotically}, the authors provided a relatively loose upper bound of $O(n \log dn + \frac{dn^2 \log d}{n + m})$. Although the focus of  Ref. \cite{zhang2024circuit} is about approximately preparing sparse quantum states using Clifford+$T$ gates,  
it is straightforward to extend the results to the exact  setting: when $m \in \om{n}$, 
 there exists a circuit for SQSP whose size is $\bo{nd\log d}$ and depth is $\bo{\frac{nd\log d \log m}{m} + \log nd}$.
In comparison, \cref{thm:tradeoff} reduces both the size and the depth by a factor of $\log d$ when $m$ is of moderate size.
Note that $\log d$ can be as large as $\ta{n}$.

Ref.~\cite{zhang2022quantum}  is one of the few papers focusing on the circuit depth of SQSP in the exact setting. When $m = \ta{\frac{nd}{\log d}}$, Theorem~\ref{thm:tradeoff} achieves the optimal circuit depth $\Theta(\log nd)$ with circuit size $\bo{\frac{nd}{\log d}}$, matching the circuit depth achieved in Ref.~\cite{zhang2022quantum}, which, however, requires $O(nd \log d)$ elementary gates and $O(nd \log d)$ ancillary qubits. Therefore, compared to the result in Ref.~\cite{zhang2022quantum}, our result achieves the same asymptotically optimal circuit depth while using significantly fewer ancillary qubits and  gates.

\subsection{Discussion}

We achieve an ancilla-depth trade-off for sparse quantum state preparation in this paper;
however, several interesting problems remain open and are worthy of further discussion:
\begin{itemize}
    \item Can we further improve the ancilla-depth trade-off, or alternatively, prove a matching lower bound? It has been shown that the circuit size lower bound for sparse quantum state preparation is $\Omega\left(\frac{nd}{\log(m + n) + \log d} + n\right)$~\cite{luo2024circuit}, and a general circuit depth lower bound of $\Omega(\log nd)$ has been established in~\cite{zhang2022quantum}. Together, these results imply that any SQSP circuit using $m$ ancillary qubits must have circuit depth at least $\Omega\left(\frac{nd}{(m + n)(\log(m + n) + \log d)} + \log nd\right)$ in the worse case. This raises the natural question of whether the current ancilla-depth trade-offs are tight or can still be improved.
 
    \item We say a Boolean function $f:\cbra{0, 1}^n \xrightarrow{} \cbra{0, 1}^{\tilde{n}}$ is a sparse Boolean function if it has at most $d$ inputs that have non-zero outputs. 
    It was proved in \cite{zhang2024circuit} that a sparse Boolean function could be implemented by a circuit of size $\bo{n\tilde{n}d}$ and depth $\bo{\frac{n\tilde{n}d\log m}{m}}$ given $\om{n} \leq m \leq \bo{n\tilde{n}d}$ ancillary qubits. An interesting problem is to improve the circuit size and the ancilla-depth trade-off for implementing sparse Boolean functions.
\end{itemize}

\subsection{Organization}
In Section~\ref{section:pre}, we recall some notations and introduce some helpful lemmas.
In Section~\ref{sec:tradeoff}, we prove the main theorem of the paper, i.e., Theorem~\ref{thm:tradeoff}.
Some conclusions are made in Section \ref{sec:conclusion}.

\section{Preliminaries}\label{section:pre}
In this paper, all logarithms are assumed to be base 2. We use subscripts to denote the number of qubits in a quantum register. For example, $\ket{\cdot}_n$ represents a register consisting of $n$ qubits. Given an $n$-bit integer $x = x_0 \dots x_{n-1}$, for any $0 \leq j < k \leq n$, we denote the substring $x_j \dots x_{k-1}$ as $x[j: k]$. The unary encoding $e_x$ of $x$ is a $2^n$-bit string, where the $x$-th bit is set to 1, and all the other bits are set to 0.

In the following, we introduce some helpful lemmas. The first lemma is on the depth of a $(n+1)$-qubit parity gate.
\begin{lemma}[\cite{moore2001parallel}]\label{lemma:parallelcnot}
    Let $C$ be a quantum circuit consisting of $n$ CNOT gates with identical target qubits but different control qubits. There exists an equivalent circuit of depth $\bo{\log n}$ and size $\bo{n}$, without ancillary qubits.
\end{lemma}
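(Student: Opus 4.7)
The plan is to realize the given depth-$n$ sequence of CNOTs with common target $t$ and distinct controls $c_1,\dots,c_n$, whose effect is $t \mapsto t \oplus c_1 \oplus \cdots \oplus c_n$ with the controls untouched, by a balanced binary-tree XOR performed \emph{in place on the control register}. Since we are forbidden ancillas, the trick is to temporarily corrupt the controls, pipe the accumulated parity onto $t$ with a single CNOT, and then uncompute the tree.

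Concretely, I would proceed in three stages. In the compute stage, at level $k=1,\dots,\lceil \log n\rceil$, apply in parallel the CNOTs $\mathrm{CNOT}_{c_{j}\to c_{j+2^{k-1}}}$ for every $j$ that is a multiple of $2^k$ with $j+2^{k-1}\le n$. A direct induction shows that after level $k$ the qubit $c_{j+2^{k-1}}$ holds $c_{j+1}\oplus c_{j+2}\oplus\cdots\oplus c_{j+2^{k-1}}$, so after all $\lceil\log n\rceil$ levels one designated qubit, say $c_n$ (after appropriate padding if $n$ is not a power of two), stores the full XOR $c_1\oplus\cdots\oplus c_n$. In the transfer stage I apply a single $\mathrm{CNOT}_{c_n\to t}$, which writes exactly the required parity onto $t$. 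In the uncompute stage I apply the same tree of CNOTs in reverse order; since each CNOT is self-inverse and the target qubit $t$ participates in none of these layers, every $c_i$ is restored to its original value while $t$ is untouched.

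Each parallel level consists of CNOTs on disjoint pairs of qubits, so it is genuinely depth-$1$; together with the one transfer gate and the symmetric uncomputation, the total depth is $2\lceil\log n\rceil+1 = O(\log n)$ and the total gate count is $2(n-1)+1 = O(n)$, with no ancillary qubits used. I do not foresee a real obstacle: the only mild nuisance is handling $n$ that is not a power of two, which I would dispose of by a standard padding argument (treat missing leaves as absent, so some internal nodes have only one child and propagate their value unchanged), and the correctness argument is just the inductive identity on XOR accumulators stated above.
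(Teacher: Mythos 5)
Your proof is correct and is the standard binary-tree compute--transfer--uncompute construction; this is essentially the argument behind the cited result of Moore and Nilsson, as the paper itself only cites this lemma without proof. The one blemish is a harmless indexing slip in the level-$k$ description (the gate $\mathrm{CNOT}_{c_j \to c_{j+2^{k-1}}}$ should merge the two half-block accumulators, i.e.\ act from position $j+2^{k-1}$ into position $j+2^{k}$, so that after level $k$ position $j+2^k$ holds $c_{j+1}\oplus\cdots\oplus c_{j+2^k}$), which does not affect the depth $2\lceil\log n\rceil+1$, size $2n-1$, or ancilla-free conclusions.
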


To facilitate the parallel execution of quantum gates, it is necessary to efficiently replicate the contents of a register, as shown in the following lemma:

\begin{lemma}[{\cite[Fanout Gates]{moore2001parallel, sun2023asymptotically}}]\label{lemma:copy}
There is a circuit of depth $\bo{\log n}$ and consisting of $\bo{n}$ CNOT gates, which implements the following transformation:
\begin{align}
    \ket{x} \ket{b_1}\ket{b_2}\dots \ket{b_n} \xrightarrow{} \ket{x} \ket{b_1\oplus x}\ket{b_2\oplus x}\dots \ket{b_n\oplus x}.
\end{align}
where $x, b_1, \dots, b_n \in \cbra{0, 1}$. And if $b_i = 0$ for all $1 \leq i \leq n$, then the circuit consists of $n$ CNOT gates.
\end{lemma}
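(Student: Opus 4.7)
The plan is to handle the two cases of the lemma separately, first establishing the zero-initialized special case by a binary-tree doubling argument, and then reducing the general case to it via a compute/copy/uncompute sandwich using $n$ ancillary qubits.

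For the special case $b_i = 0$, I would run a standard doubling construction. Think of the qubit holding $x$ as the unique ``source'' at time $0$; in each round every existing source qubit simultaneously applies a CNOT into one previously untouched target qubit, turning that target into a new source. Because the CNOTs inside any single round act on pairwise disjoint qubits, each round has depth $1$; since the number of sources doubles per round, after $\lceil \log(n+1) \rceil = O(\log n)$ rounds every $b_i$ has been written exactly once and now equals $x$. The total gate count is then exactly $n$, one CNOT per target, and no ancillary qubits are needed, which matches the tight ``$n$ CNOT gates'' clause of the lemma.

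For the general case, I would allocate $n$ fresh ancillas $a_1, \dots, a_n$ in $|0\rangle$. First, run the zero-initialized construction above on $(x; a_1, \dots, a_n)$, yielding $|x\rangle|x\rangle^{\otimes n}$ on the $a$-register in depth $O(\log n)$ using $n$ CNOTs. Next, apply the $n$ pairwise-disjoint CNOTs with control $a_i$ and target $b_i$ in a single parallel layer, which sends each $|b_i\rangle$ to $|b_i \oplus x\rangle$ at a cost of depth $1$ and $n$ CNOTs. Finally, invert the first stage by running the same circuit backwards (CNOT is self-inverse) to reset every $a_i$ to $|0\rangle$. Correctness follows from the commutation of CNOTs acting on disjoint wires, and the overall cost is depth $O(\log n)$ with $3n = O(n)$ CNOT gates, as required.

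The only point that needs a little care is the doubling schedule when $n+1$ is not a power of two: in the last, partially filled round one must specify which source writes to which remaining target so that the writes are still pairwise disjoint, every target is written exactly once, and the round count stays at $\lceil \log(n+1) \rceil$. This is a routine matter of truncating the schedule at the final round. The uncomputation step in the general case is also immediate from the self-inverse property of CNOT, so no further argument is required there; I do not foresee a genuine technical obstacle beyond this bookkeeping.
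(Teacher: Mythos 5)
This is a lemma the paper cites from the literature rather than proves, so there is no in-paper proof to match against; your reconstruction is correct as far as it goes, and your zero-initialized case is exactly the standard doubling tree with the tight count of $n$ CNOTs and depth $\lceil\log(n+1)\rceil$. Where you genuinely diverge is the general case: your compute/copy/uncompute sandwich is valid, but it consumes $n$ fresh ancillary qubits, whereas the fanout gate of the cited sources is implemented \emph{without} ancillas. This distinction is not cosmetic here: the paper tracks ancilla counts explicitly (compare Lemma~\ref{lemma:parallelcnot}, which says ``without ancillary qubits'', and Lemma~\ref{lemma:toffoli}, which states its ancilla cost), and the fanout lemma is invoked in places --- e.g.\ the uncomputation of $\ket{j}_{\ell_n-\ell_k}$ in the proof of Lemma~\ref{lemma:encodeunary} --- where no extra ancilla budget is set aside for it, so the ancilla-free version is the one actually being used. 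The ancilla-free construction is a one-line modification of what you already have: the parity gate (one target, $n$ controls) is computed in depth $2\lceil\log n\rceil+1$ and size $2n-1$ with no ancillas by an up-sweep that accumulates partial parities along a binary tree, a single CNOT into the target, and a down-sweep that restores the controls; since fanout is the transpose of parity as a linear map over $\mathbb{F}_2$, reversing the gate order of that circuit and swapping the control and target of every CNOT yields an ancilla-free fanout circuit of the same depth and size on arbitrary inputs $b_1,\dots,b_n$. Your approach buys conceptual simplicity (everything reduces to the clean-ancilla case); the transpose-of-parity approach buys the ancilla-freeness that the paper's bookkeeping implicitly relies on.
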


Apart from CNOT circuits, multi-control Toffoli gates, which compute the function $\text{AND}_n$ of the $n$ control qubits and add the result to the target qubit, are another common gates in quantum circuit synthesis. Recently, \cite{nie2024quantum} proved that an $n$-qubit Toffoli gate can be realized by a circuit of logarithmic depth.

\begin{lemma}[{\cite[Theorem 1, Theorem 6]{nie2024quantum}}]\label{lemma:toffoli}
    An $n$-qubit Toffoli gate can be implemented by a circuit of depth $\bo{\log n}$ and size $\bo{n}$ with one ancillary qubit. If no ancillary qubits are available, the depth is $\bo{\log^2 n}$ and the circuit size is $\bo{n}$.
\end{lemma}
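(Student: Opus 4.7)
The plan is to decompose the preparation into parallel sub-preparations on disjoint ancillary workspaces, then coherently merge them via a controlled copy, so that the $m$ ancillary qubits provide both ``width parallelism'' (more blocks processed at once) and a logarithmic speedup in selecting between blocks. Set $k=\lfloor m/(6n)\rfloor$, so we have $k$ disjoint $n$-qubit workspaces $R_1,\dots,R_k$ with room to spare. I would partition the $d$ amplitude--basis pairs $\{(\alpha_i,q_i)\}_{i=0}^{d-1}$ into $k$ blocks $B_1,\dots,B_k$ of size at most $\lceil d/k\rceil$ and define, for each block, the normalized sub-state $\ket{\psi_j}=p_j^{-1}\sum_{i\in B_j}\alpha_i\ket{q_i}$ with block weight $p_j=\sqrt{\sum_{i\in B_j}|\alpha_i|^2}$.

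The construction proceeds in four phases. \emph{Phase~1:} In parallel on the disjoint registers $R_1,\dots,R_k$, prepare each $\lceil d/k\rceil$-sparse $n$-qubit sub-state $\ket{\psi_j}$ using a scheduled version of the near-size-optimal SQSP construction of Luo et al.~\cite{luo2024circuit}, explicitly parallelizing its layers by means of Lemmas~\ref{lemma:parallelcnot},~\ref{lemma:copy}, and~\ref{lemma:toffoli}; since the registers are disjoint, the depth of this phase equals the depth of a single block preparation. \emph{Phase~2:} On a $\lceil \log k\rceil$-qubit control register $C$, prepare the selector state $\ket{\phi}=\sum_{j=1}^{k}p_j\ket{j}$ by a general (non-sparse) QSP routine, contributing only $\bo{\log m}$ depth. \emph{Phase~3:} Fan $C$ out into $k$ collocated copies via Lemma~\ref{lemma:copy} and, in parallel across the blocks, use log-depth multi-controlled Toffolis (Lemma~\ref{lemma:toffoli}) together with parallel CNOTs (Lemma~\ref{lemma:parallelcnot}) to copy $R_j$ into the final output register $O$ whenever $C=\ket{j}$. \emph{Phase~4:} Uncompute all $R_j$'s by running the inverse of Phase~1, then uncompute $C$ by coherently recomputing the block index from $O$ (a sparse Boolean function with $d$ active inputs, realized by the same primitives) and reversing Phase~2.

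Summing the phases, the total depth is dominated by the single-block cost of Phase~1 plus the $\bo{\log nd}$ merge and uncompute overhead, yielding the claimed bound $\bo{\frac{nd\log m}{m\log m/n}+\log nd}$; the size accounting is analogous, giving $\bo{\frac{nd}{\log m/n} + \frac{nd}{\log d}}$ after adding the $k$ block preparations together with the merge and controller costs.

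The main obstacle is establishing the per-block sub-preparation primitive with precisely the claimed depth--ancilla budget. In particular, obtaining the $\log(m/n)$ (rather than $\log m$ or $\log n$) speedup inside a single $n$-qubit block likely requires either a second recursive layer of the same partition idea or a direct ``address-decoder'' argument that carefully schedules Toffoli-based lookups against a fanned-out index register; this is where the bulk of the technical work lies. A secondary subtlety is the uncomputation of $C$ in Phase~4, which must coherently invert a sparse Boolean function $q_i\mapsto j$ on $O$ within the same depth envelope, so that the final state is $\bigl(\sum_i\alpha_i\ket{q_i}\bigr)\otimes\ket{0}^{\otimes m}$ as required.
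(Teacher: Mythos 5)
Your proposal does not address the statement at hand. The lemma to be established is a purely circuit-synthesis fact about a single $n$-qubit (i.e., $(n-1)$-controlled) Toffoli gate: that it admits a depth-$\bo{\log n}$, size-$\bo{n}$ implementation with one ancillary qubit, and a depth-$\bo{\log^2 n}$, size-$\bo{n}$ implementation with none. What you have written is instead a high-level strategy for the paper's \emph{main theorem} (the SQSP depth--ancilla trade-off), complete with the trade-off bounds $\bo{\frac{nd\log m}{m\log m/n}+\log nd}$ and $\bo{\frac{nd}{\log m/n}+\frac{nd}{\log d}}$ in its conclusion. Worse, your Phases~1,~3, and~4 explicitly invoke Lemma~\ref{lemma:toffoli} as a black-box primitive, so even reinterpreted charitably the argument is circular with respect to the statement it is meant to prove. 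Nothing in the proposal constructs, or even sketches, a logarithmic-depth decomposition of a multi-controlled NOT into single-qubit and CNOT gates.

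For the record, the paper itself offers no proof either: it imports the lemma directly from Nie, Zi, and Sun (Theorems~1 and~6 of \cite{nie2024quantum}), whose argument is a recursive construction based on conditionally clean ancillae --- roughly, one splits the $n$ controls into two halves, uses a borrowed or clean qubit to record the conjunction of one half, and recurses, with careful bookkeeping to keep the total gate count linear and the recursion depth logarithmic (paying an extra $\log n$ factor in depth when no ancilla at all is available). If you want to supply a self-contained proof of this lemma, that is the construction you would need to reproduce; the block-partition-and-merge machinery you describe is aimed at an entirely different statement.
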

 The last lemma shows an efficient transformation between the binary encoding and the unary encoding.
\begin{lemma}[{\cite[Lemma 22]{sun2023asymptotically}}]\label{lemma:unary2binary2}
    There exists a quantum circuit that achieves the following transformation:
    \begin{align}
        \ket{e_i}_{2^n}\ket{0}_n \xrightarrow{} \ket{0^{2^n}}_{2^n} \ket{i}_{n} \text{, for all $0 \leq i \leq 2^n-1$}.
    \end{align}
    The circuit is of size $\bo{2^n}$ and depth $\bo{n}$, using $2^{n+1}$ ancillary qubits.
\end{lemma}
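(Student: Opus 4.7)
The plan is a divide-and-conquer on the size of the unary register: reshape the $2^n$ source qubits as a $2^{n_1}\times 2^{n_2}$ grid with $n_1 + n_2 = n$, extract the row- and column-marginal one-hot vectors (each itself a unary encoding of half as many bits), recurse on both in parallel, and finally uncompute the original source by running a binary-to-unary decoder in reverse. Indexing source qubits by $(h,\ell)$ with $j = h\cdot 2^{n_2}+\ell$, the unique ``$1$'' of $\ket{e_i}$ sits at $(i_{\mathrm{high}}, i_{\mathrm{low}})$, so the marginals expose the two halves of $i$.

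I first allocate ancilla registers $U\in\{0,1\}^{2^{n_1}}$ and $T\in\{0,1\}^{2^{n_2}}$ and use the parity circuits of Lemma \ref{lemma:parallelcnot} to set $U_h = \bigoplus_\ell S_{h,\ell}$ and $T_\ell = \bigoplus_h S_{h,\ell}$. Distinct rows share no source qubits, so the $2^{n_1}$ trees producing $U$ run in parallel with depth $O(n_2)$ and $O(2^n)$ CNOTs, and likewise for $T$; since the two directions share the source as controls, I either sequence the blocks or fan $S$ out once via Lemma \ref{lemma:copy}, still at depth $O(n)$. Because $\ket{e_i}$ contains exactly one ``$1$'', this leaves $U=\ket{e_{i_{\mathrm{high}}}}$, $T=\ket{e_{i_{\mathrm{low}}}}$, and $S$ intact. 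I then recurse on $U$ and $T$ in parallel, writing $\ket{i_{\mathrm{high}}}_{n_1}$ and $\ket{i_{\mathrm{low}}}_{n_2}$ into the output register and clearing $U, T$ along the way.

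With $\ket{i}_n$ assembled and $\ket{e_i}$ still living in $S$, I apply $D^{-1}$, where $D$ is a depth-$O(n)$, size-$O(2^n)$ binary-to-unary decoder. I build $D$ by a flagged recursion $D_k^f:\ket{f}\ket{i}_k\ket{0}_{2^k}\to\ket{f}\ket{i}_k\ket{f\cdot e_i}_{2^k}$. At level $k$, two $3$-qubit Toffolis compute $f_0=f\wedge\bar b_{k-1}$ and $f_1=f\wedge b_{k-1}$ in $O(1)$ depth; the remaining $k-1$ bits of $i$ are fanned out once so that $D_{k-1}^{f_0}$ and $D_{k-1}^{f_1}$ run in parallel on the two halves of the unary register; then $f_0, f_1$ and the fanout copies are uncomputed. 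This gives $D_d(n)=D_d(n-1)+O(1)=O(n)$ depth, $O(2^n)$ size, and, by summing the geometric series over the recursion tree, $O(2^n)$ ancilla.

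The main technical obstacle is the decoder itself: a naive construction that fans $b_{k-1}$ out to a single enable bit and then controlled-swaps between the two halves already costs $O(n)$ at every level, leading to $O(n^2)$ depth overall. The flagged scheme sidesteps this, but one must account carefully for every transient flag and fanout copy so that the cumulative ancilla across all nested calls plus the registers $U, T$ fits within the $2^{n+1}$ budget. Once the decoder is in place, the top-level recurrences $T_d(n)=T_d(\lceil n/2\rceil)+O(n)=O(n)$, $T_s(n)=2T_s(\lceil n/2\rceil)+O(2^n)=O(2^n)$, and $A(n)\le 2A(\lceil n/2\rceil)+O(2^n)\le 2^{n+1}$ deliver the claimed bounds.
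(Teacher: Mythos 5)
The paper does not actually prove this lemma --- it is imported verbatim as Lemma~22 of the cited reference \cite{sun2023asymptotically} --- so there is no in-paper argument to compare against. Taken as a standalone proof, your construction is sound in its asymptotic claims and is a clean way to get there: the balanced $2^{\lceil n/2\rceil}\times 2^{\lfloor n/2\rfloor}$ grid with row/column parities correctly exposes $e_{i_{\mathrm{high}}}$ and $e_{i_{\mathrm{low}}}$ (because the source is one-hot, OR and XOR coincide, so Lemma~\ref{lemma:parallelcnot} applies), the two recursive calls act on disjoint registers and hence run in parallel, and the recurrences $T_d(n)=T_d(\lceil n/2\rceil)+O(n)$ and $T_s(n)=2T_s(\lceil n/2\rceil)+O(2^n)$ do give depth $O(n)$ and size $O(2^n)$. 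The flagged decoder is also the right fix for uncomputing the source register: a controlled-swap cascade would indeed cost $\Theta(n^2)$ depth, whereas propagating flags down the binary tree at $O(1)$ depth per level gives $O(n)$.

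The one place the argument does not close is the exact ancilla budget. Counting the decoder's live workspace at its deepest point gives roughly $\sum_{j=0}^{n-1}2^{j}\bigl(2+(n-j-1)\bigr)\approx 3\cdot 2^{n}$ qubits (two flags per node of the recursion tree plus the address-bit copies), which already exceeds the stated $2^{n+1}$, and your closing chain $A(n)\le 2A(\lceil n/2\rceil)+O(2^n)\le 2^{n+1}$ is not a derivation: an $O(2^n)$ term with an unspecified constant cannot be capped by $2^{n+1}$. This affects only the constant --- everything remains $O(2^n)$ ancillas, which is all the asymptotic statements of the main theorem require --- but note that the paper propagates the literal constant from this lemma into the inequality $m\ge \frac{3n2^{r}}{r}$ and hence into the threshold $m\ge 6n$. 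So you should either tighten the accounting (for instance, write the last level of flags directly into the unary register and reuse qubits freed between the marginal stage and the decoder stage) or state the lemma with $O(2^n)$ ancillas and accept a different constant downstream.
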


\section{Sparse Quantum State Preparation}\label{sec:tradeoff}
 In this section, we prove \cref{thm:tradeoff}. In \cref{framework}, we give a brief introduction to the framework of the proposed algorithm.  In \cref{details}, we dive into the details and analyze the complexity of the algorithm. Finally, in \cref{sec:proof} we provide the deferred proofs of two key lemmas.
 
\subsection{Framework}\label{framework}

Our algorithm follows a two-step manner: a $\ceil{\log d}$-qubit quantum state $\sum_i \alpha_i \ket{i}$ is first prepared, and then it is transformed into the target quantum state. 
The first step is a task of general quantum state preparation, which has been extensively investigated and could be implemented by a circuit of size $\bo{d}$ and depth $\bo{\frac{d}{n+m}+\log d}$ with $m$ ancillary qubits~\cite{sun2023asymptotically, yuan2023optimal}. 
The first step is not the main source of complexity, thus we focus on the second step from now on.
However, rather than transforming directly into the target state, we introduce an intermediate state, and then we follow the two-phase workflow:  
\begin{align}\label{equ:workflow}
   \sum_{i=0}^{d-1} \alpha_i \ket{i}_{\ceil{\log d}} \xrightarrow[]{\texttt{{Phase 1}} } \sum_{i=0}^{d-1} \alpha_i \rbra*{\bigotimes_{j=0}^{\frac{n}{r}-1} \ket{e_{q_i[jr: (j+1)r]}}_{2^r}} \xrightarrow[]{\texttt{{Phase 2}} } \sum_{i=0}^{d-1} \alpha_i \ket{q_i}_{n},
\end{align}
where $r$ is a parameter to be determined later, and without loss of generality, $r$ is  required to divide $n$. $\bigotimes_{j=0}^{\frac{n}{r}-1} \ket{e_{q_i[jr: (j+1)r]}}_{2^r}$ is the $(n,r)$-unary encoding of  $q_i$, as defined in \cref{def:nr}.  This intermediate state was previously used in \cite{luo2024circuit} to reduce the circuit size for SQSP. 
In summary, in \texttt{{Phase 1}}, we transform the initial state to the intermediate state, which is further transformed to the target state in \texttt{{Phase 2}}. In the process, we optimize the circuit depth by leveraging auxiliary qubits to maximize the parallel execution of quantum gates, obtaining the trade-off between depth and ancilla number.



\begin{definition}\label{def:nr}
Given an $n$-bit integer $x = x_0x_1\cdots x_{n-1}$ and another integer $r > 0$ such that $r$ divides $n$, the $(n, r)$-unary encoding of  $x$ is a $\frac{n2^r}{r}$-bit string $e_{x[0:r]} e_{x[r:2r]} \cdots e_{x[n-r:n]}$. In particular, when $r = n$, the $(n, n)$-unary encoding corresponds to the standard unary encoding.
\end{definition}

 We give a concrete example to illustrate the notations and our workflow.
\begin{example}\label{examp:1}
    Let $d = 4$, $n = 8$, $r = 2$, with $q_0 = 11011000$, $q_1 = 00011001$, $q_2 = 10001111$, and $q_3 = 01011011$. Firstly, we prepare a two-qubit state as follows:
    \begin{align}
        \alpha_0 \ket{0}_2 + \alpha_1 \ket{1}_2 + \alpha_2 \ket{2}_2 + \alpha_3 \ket{3}_2,
    \end{align}
    which is further transformed to the $16$-qubit intermediate state in \texttt{{Phase 1}}:
    \begin{align}
    & \alpha_0  \ket{0001\ 0100\ 0010\ 1000}
            + \alpha_1 \ket{1000\ 0100\ 0010\ 0100} \\
    +\ & \alpha_2 \ket{0010\ 1000\ 0001\ 0001} + \alpha_3 \ket{0100\ 0100\ 0010\ 0001}.
    \end{align}
    After \texttt{{Phase 2}}, we obtain the target state:
    \begin{align}
    \alpha_0  \ket{11011000}
            + \alpha_1 \ket{00011001} 
    + \alpha_2 \ket{10001111} + \alpha_3 \ket{01011011}.
    \end{align}
\end{example}

\subsection{Algorithm}\label{details}
To establish the correctness of our algorithm, we demonstrate how to implement \texttt{{Phase 1}} and \texttt{{Phase 2}} as defined in \cref{equ:workflow}. We begin with \texttt{{Phase 1}}, for which we prove the following lemma:

\begin{lemma}\label{lemma:phase1}
   Given an integer $r > 0$ such that $r$ divides $n$,  for any $0 < k \leq d$ being a power of two, \texttt{{Phase 1}}, as defined in \cref{equ:workflow}, can be implemented by a quantum circuit of depth $\bo{\frac{d\log nk}{k}}$ and size $\bo{\frac{nd}{r} + \frac{d\log d}{k}}$. The total number of qubits  is $\ceil{\log d} + 4k + \frac{nk}{r} + \frac{n2^r}{r}$.
\end{lemma}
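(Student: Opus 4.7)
The plan is to exploit the decomposition $i = bk + j$, where $b \in \{0,\ldots,d/k-1\}$ indexes the high $\log(d/k)$ bits of $i$ and $j \in \{0,\ldots,k-1\}$ indexes the low $\log k$ bits. The central idea is to avoid fanning out the full $\ceil{\log d}$-qubit input register, since an ancilla cost of $k\log d$ would violate the stated budget. Instead, I would first apply the inverse of Lemma \ref{lemma:unary2binary2} to the low $\log k$ bits of $\ket{i}$, converting them into a unary register $\ket{e_{i_{\mathrm{low}}}}_k$ in depth $O(\log k)$, size $O(k)$, and $O(k)$ reclaimable scratch ancilla. This yields $k$ precomputed indicator qubits $\ell_0,\ldots,\ell_{k-1}$, with $\ell_j = 1$ iff $i_{\mathrm{low}} = j$, which are reused across all $d/k$ batches (and reversed at the very end to restore the input register).

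For each batch $b$, I would (a) compute a single high-bit equality qubit $h_b = [i_{\mathrm{high}} = b]$ via a multi-controlled Toffoli on the classically-flipped high bits of $\ket{i}$, using Lemma \ref{lemma:toffoli} in depth $O(\log\log d)$ and size $O(\log d)$; (b) fan $h_b$ out to $k$ copies by Lemma \ref{lemma:copy} in depth $O(\log k)$ and size $O(k)$; (c) compute $k$ activation qubits $c_j = \ell_j \wedge h_b^{(j)}$ by parallel Toffolis of $O(1)$ depth and $O(k)$ size; (d) fan each $c_j$ out to $n/r$ copies, at total depth $O(\log(n/r))$, size $O(kn/r)$, and ancilla $O(kn/r)$; (e) in parallel across the $n/r$ output blocks, for each block $j'$ I XOR position $q_{bk+j}[j'r:(j'+1)r]$ of block $j'$ controlled by $c_j^{(j')}$, for $j = 0,\ldots,k-1$, handling the possibly colliding targets within a block via the logarithmic-depth parity gate of Lemma \ref{lemma:parallelcnot} at depth $O(\log k)$ and size $O(k)$ per block; and (f) uncompute (c), (b), (a) at the same cost. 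Correctness follows because, in each branch $\ket{i}$, exactly one pair $(b,j)$ satisfies $c_j = 1$, so the output register accumulates precisely the $(n,r)$-unary encoding of $q_i$.

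Per batch, the depth is $O(\log k + \log(n/r) + \log\log d) = O(\log nk)$ (using $r \ge 1$ and $d \le 2^n$) and the size is $O(\log(d/k) + kn/r)$; summing over $d/k$ batches yields total depth $O(d\log(nk)/k)$ and total size $O(nd/r + d\log d/k)$, as claimed. The ancilla accounting splits into $O(k)$ for the unary register together with the $h_b$ and $c_j$ bits (and their conversion scratch, freed before reuse) and $O(kn/r)$ for the block-level fan-outs of the $c_j$'s, on top of the $\ceil{\log d}$ input register and the $n 2^r/r$ output register.

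The main obstacle will be enforcing the $4k + nk/r$ ancilla budget tightly: the unary-conversion scratch, the fan-out of $h_b$, and the $c_j$'s all compete for the same $O(k)$ slots, so I would need to schedule uncomputations carefully so that scratch from (a)-(c) is released before (d) allocates its block-fan-out ancillas. A secondary but essential subtlety is step (e): naively applying the $k$ CNOTs per block sequentially would inflate the per-batch depth to $\Omega(k)$ and destroy the trade-off, so Lemma \ref{lemma:parallelcnot} must be invoked block-by-block to collapse each colliding group of $k$ CNOTs into a single $O(\log k)$-depth parity gate.
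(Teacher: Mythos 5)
Your construction mirrors the paper's strategy almost step for step: batching the indices by their high $\lceil\log d\rceil-\log k$ bits, converting the low $\log k$ bits to unary once via \cref{lemma:unary2binary2}, forming per-batch activation qubits by ANDing a fanned-out high-bit equality flag with the low-bit indicators, fanning those out across the $n/r$ blocks, and invoking \cref{lemma:parallelcnot} to collapse colliding CNOT targets within a block. The per-batch size and depth accounting is also correct. However, there is a genuine gap: you never erase the branch label. At the end of your procedure the joint state is $\sum_i \alpha_i \ket{i}_{\lceil\log d\rceil}\otimes\bigl(\bigotimes_{j}\ket{e_{q_i[jr:(j+1)r]}}\bigr)$, i.e., the input register remains entangled with the output, whereas \texttt{Phase 1} must produce $\sum_i\alpha_i\bigotimes_j\ket{e_{q_i[jr:(j+1)r]}}$ with every other register restored to $\ket{0}$ so it can be discarded or reused. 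The root cause is that your step (f) uncomputes the activation qubits $c_j$ by re-running (a)--(c), i.e., from the \emph{still-present} input; the information identifying $i$ is therefore never transferred out of the input register. Your plan to ``reverse the unary conversion at the very end to restore the input register'' makes this worse rather than better: restoring $\ket{i}$ is exactly what must not happen.

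The paper's proof avoids this by erasing the branch label stage by stage, each stage being uncomputed \emph{from the next one}: in \cref{lemma:encodeunary} the prefix $\ket{j}$ is zeroed by a fanout controlled on a tag qubit and the original $\ket{e_i}_k$ is zeroed by CNOTs from the fresh copy, and then Step~b of \cref{lemma:u2nr} erases that fresh copy with $(\frac{n}{r}+1)$-qubit Toffoli gates whose controls sit on the output register $\texttt{B}$ at the positions encoding $x_i$. That last backward pass --- computing the index from the $(n,r)$-unary output and XORing it off --- is the ingredient your proposal is missing; it costs roughly as much as the forward pass and needs the same duplication and parallelization tricks (the paper copies the relevant qubits of $\texttt{B}$ because the $\mathrm{Tof}_i$ share control qubits). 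Without it, tracing out the input register decoheres the superposition and the lemma fails. The remaining discrepancies (e.g., whether the scratch fits in exactly $4k$ rather than $O(k)$ extra qubits) are minor by comparison, but would also need to be tightened to match the stated qubit count.
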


\begin{proof}
Let ${\ell_n} \coloneqq \ceil{\log d}$ and $\ell_k \coloneqq \log k$. 
The  set $\mathcal{S} \coloneqq \cbra{0, \dots, d-1}$ is partitioned into $\ceil{\frac{d}{k}}$ subsets $\cbra{\mathcal{S}_j}_{0 \leq j < \ceil{\frac{d}{k}}}$, where all the  elements within the same subset $\mathcal{S}_j$ share the same prefix $j$ in the $({\ell_n} - \ell_k)$ most significant bits. Formally, for $0 \leq j < \ceil{\frac{d}{k}} - 1$, we define  
\begin{align}\label{equ:sj1}
    \mathcal{S}_j \coloneqq \cbra{jk, \dots, \rbra{j+1}k - 1},  
\end{align}  
and for the last subset, we define  
\begin{align}\label{equ:sj2}
    \mathcal{S}_{\ceil{\frac{d}{k}}-1} \coloneqq \cbra{(\ceil{\frac{d}{k}}-1)k, \dots, d-1}.  
\end{align}  

\begin{example}
    Let $d = 7$, $k=4$. Then we have ${\ell_n} = 3$, $\ell_k = 2$, $\mathcal{S} = \cbra{0, 1, 2, 3, 4, 5, 6}$, $\mathcal{S}_0 = \cbra{0, 1, 2, 3}$ and $\mathcal{S}_1 = \cbra{4, 5, 6}$, as shown in \cref{table:s}.

\end{example}
\begin{table}[htbp]
\centering
\caption{Illustration of the notations.}
\label{table:s}
\begin{threeparttable}
\begin{tabular}{c|c|c|c}
\toprule
                    & $i$ & $i[0:{\ell_n}-\ell_k]$ & $i[{\ell_n}-\ell_k:{\ell_n}]$   \\ \midrule
\multirow{4}{*}{$\mathcal{S}_0$} & 0     & 0 & 00   \\ \cline{2-4} 
                    & 1     & 0 & 01   \\ \cline{2-4} 
                    & 2     & 0 & 10   \\ \cline{2-4} 
                    & 3     & 0 & 11   \\ \midrule
\multirow{4}{*}{$\mathcal{S}_1$} & 4     & 1 & 00   \\ \cline{2-4} 
                    & 5     & 1 & 01   \\ \cline{2-4} 
                    & 6     & 1 & 10   \\  \bottomrule
\end{tabular}
\end{threeparttable}

\end{table}

The procedure for \texttt{{Phase 1}} is presented in \cref{alg:nr-encoding}. We start with the initial state
\begin{align}
    \sum_{0 \leq i < d} \alpha_i \ket{i}_{{\ell_n}} \ket{0}_k \ket{0}_k \ket{0}_{n2^r/r},
\end{align}
where we omit some ancillary qubits for brevity. 
At first, we make the transformation on the first ${\ell_n} + k$ qubits that transforms the $\ell_k$ least significant bits of $\ket{i}_{{\ell_n}}$ into its unary encoding in a new register:
\begin{equation}\label{equ:phase1-step1}
\begin{aligned}
    \sum_{i = 0}^{d-1} \alpha_i \ket{i}_{{\ell_n}} \ket{0}_{k} 
     & = \sum_{j=0 }^{ \ceil{\frac{d}{k}}-1} \sum_{i=0}^{ \abs{\mathcal{S}_j}-1} \alpha_{jk+i} \ket{j}_{{\ell_n}-\ell_k} \ket{i}_{\ell_k} \ket{0}_k\\
    & \xrightarrow{} \sum_{j=0 }^{ \ceil{\frac{d}{k}}-1} \sum_{i=0}^{ \abs{\mathcal{S}_j}-1} \alpha_{jk+i} \ket{j}_{{\ell_n}-\ell_k} \ket{0}_{\ell_k} \ket{e_i}_k,
\end{aligned}
\end{equation}
where in the first equation, these basis elements are  partitioned into $\ceil{\frac{d}{k}}$ subsets such that the  elements within the same subset $\mathcal{S}_j$ share the same prefix $j$ in the $({\ell_n} - \ell_k)$ most significant bits as defined in  \cref{equ:sj1} and \cref{equ:sj2}. 
According to \cref{lemma:unary2binary2}, this transformation can be implemented by a quantum circuit of size $\bo{k}$ and depth $\bo{\log k}$ using additional $2k$ ancillary qubits.

\begin{algorithm}[htbp]
\caption{ $(n, r)$-Unary Encoding}\label{alg:nr-encoding} 
\begin{algorithmic}[1]
    \REQUIRE {$ \mathcal{P} = \cbra*{\rbra*{\alpha_i, q_i}}_{0 \leq i \leq d - 1}$.}
    \ENSURE {Quantum state $\sum_{i=0}^{d-1} \alpha_i \rbra*{\bigotimes_{j=0}^{\frac{n}{r}-1} \ket{e_{q_i(jr, (j+1)r)}}}$.}
    \STATE Let the initial state  be $\sum_i \alpha_i \ket{i}_{{\ell_n}} \ket{0}_k \ket{0}_k \ket{0}_{n2^r/r}$ ;
    \STATE Let $\cbra{S_j}_{0 \leq j < \ceil{\frac{d}{k}}}$ defined as \cref{equ:sj1} and \cref{equ:sj2} ;
    \STATE Apply the transformation defined as in \cref{equ:phase1-step1} on the first ${\ell_n} + k$ qubits;
    \FOR{$j=0$ to $\ceil{\frac{d}{k}}-1$} 
        \STATE Apply circuit $\mathcal{C}_j^1$ on the first ${\ell_n} + 2k$ qubits according to \cref{lemma:encodeunary} ;
        \STATE Apply circuit $\mathcal{C}_j^2$ on the last $k + \frac{n2^r}{r}$ qubits, where $\mathcal{C}_j^2$ is defined according to \cref{lemma:u2nr} with $x_i \coloneqq q_{jk+i}$ for $0 \leq i \leq \abs{\mathcal{S}_j}-1$ ;
    \ENDFOR
\end{algorithmic}

\end{algorithm}

Then, for each $j$ from $0$ to $\ceil{\frac{d}{k}}-1$, we apply the circuits described in \cref{lemma:encodeunary} and  \cref{lemma:u2nr}. The proofs of the two lemmas are deferred to \cref{sec:proof}.

\begin{lemma}\label{lemma:encodeunary}
    In the $j$-th iteration, there exists a quantum circuit $\mathcal{C}_j^1$ achieving the following transformation for the basis states corresponding to $\mathcal{S}_j$ while keeping other basis states unchanged:
    \begin{align}\label{equ:tounary}
    \ket{j}_{{\ell_n}-\ell_k}\ket{0}_{\ell_k} \ket{e_i}_k \ket{0}_k 
    \xrightarrow[]{} 
    \ket{0}_{{\ell_n}}\ket{0}_k \ket{e_i}_k, \quad \text{for } 0 \leq i < k.
\end{align}
The circuit is of size $O({\ell_n} - \ell_k + k)$ and depth $O(\log ({\ell_n} - \ell_k) + \log k)$, using additional $k$ ancillary qubits. 
\end{lemma}

\begin{lemma}\label{lemma:u2nr}
   Given positive integers $n$, $r$, $k$, a set of $n$-bit integers $\mathcal{X} = \cbra{x_i}$ of size at most $k$,
    there exists a quantum circuit $\mathcal{C}^2$ implementing the following transformation:
    \begin{align}
        & \ket{e_i}_k \ket{0}_{n2^r/r}   \xrightarrow[]{}  \ket{0}_k \rbra*{\bigotimes_{j=0}^{\frac{n}{r}-1} \ket{e_{x_i[jr: (j+1)r]}}_{2^r}} \text{, for all $0 \leq i \leq \abs{\mathcal{X}}-1$}. \label{equ:cc21} \\ 
        & \ket{0}_k \rbra*{\bigotimes_{j=0}^{\frac{n}{r}-1} \ket{e_{x[jr: (j+1)r]}}_{2^r}}   \xrightarrow[]{}  \ket{0}_k \rbra*{\bigotimes_{j=0}^{\frac{n}{r}-1} \ket{e_{x[jr: (j+1)r]}}_{2^r}} \text{, if $x \notin \mathcal{X}$}, \label{equ:cc22} \\
        &  \ket{0}_k \ket{0}_{n2^r/r}  \xrightarrow[]{}  \ket{0}_k \ket{0}_{n2^r/r}. 
        \label{equ:cc23}
    \end{align}
    The circuit is of size $\bo{\frac{nk}{r}}$ and depth $\bo{\log nk}$, and using additional $(\frac{nk}{r}+k)$ ancillary qubits. 
\end{lemma}

In the $j$-th iteration, the $(n, r)$-unary representation of each $q_i$ for $i \in \mathcal{S}_j$ is encoded into a register of $\frac{n2^r}{r}$ qubits. More specifically, applying the circuit given in \cref{lemma:encodeunary} on the first ${\ell_n} + 2k$ qubits, the state corresponding to the set $\mathcal{S}_j$ undergoes the following transformation:
\begin{align} \sum_{i=0}^{ \abs{\mathcal{S}_j}-1} \alpha_{jk+i} \ket{j}_{{\ell_n}-\ell_k} \ket{0}_{\ell_k} \ket{e_i}_k \ket{0}_k \xrightarrow{} \sum_{i=0}^{ \abs{\mathcal{S}_j}-1} \alpha_{jk+i} \ket{0}_{{\ell_n}} \ket{0}_k \ket{e_i}_k.
 \end{align}
Furthermore,  applying circuit $ \mathcal{C}_j^2$ defined according to \cref{lemma:u2nr} with $x_i \coloneqq q_{jk+i}$ for $0 \leq i \leq \abs{\mathcal{S}_j}-1$, the above state joint with $\ket{0}_{n2^r/r}$  is transformed into the following state:
\begin{align} \sum_{i=0}^{ \abs{\mathcal{S}_j}-1} \alpha_{jk+i} \ket{0}_{{\ell_n}} \ket{0}_k \ket{0}_k \rbra*{\bigotimes_{j'=0}^{\frac{n}{r}-1} \ket{e_{q_{jk+i}[j'r: (j'+1)r]}}_{2^r}}.
 \end{align}
 \cref{equ:cc22} and \cref{equ:cc23} ensure that $\mathcal{C}_j^2$ would perserve other basis states not corresponding to the set $\mathcal{S}_j$.

 Therefore,  after  
 $j$ running from $0$ to $\ceil{\frac{d}{k}}-1$, by tracing out the ancillary qubits, we obtain the target state:
\begin{align}\sum_{i=0}^{d-1} \alpha_i \rbra*{\bigotimes_{j=0}^{\frac{n}{r}-1} \ket{e_{q_i[jr: (j+1)r]}}_{2^r}}. 
 \end{align}

Next, we analyze the overall circuit complexity. The circuit size is given by  
\begin{align}
O(k) + O({\ell_n} - \ell_k + k) \cdot \ceil*{\frac{d}{k}} + O\left(\frac{nk}{r}\right) \cdot \ceil*{\frac{d}{k}} = O\left(\frac{nd}{r} + \frac{d\log d}{k}\right).
\end{align}  
The circuit depth is  
\begin{align}
O(\log k) + O(\log ({\ell_n} - \ell_k) + \log k) \cdot \ceil*{\frac{d}{k}} + O(\log nk) \cdot \ceil*{\frac{d}{k}} = O\left(\frac{d\log nk}{k}\right).
\end{align}  
The total number of qubits required is  
\begin{align}
\ceil{\log d} + 3k + \frac{nk}{r} + k + \frac{n2^r}{r} = \ceil{\log d} + 4k + \frac{nk}{r} + \frac{n2^r}{r}.
\end{align}
\end{proof}

We give an example to illustrate the workflow of \cref{lemma:phase1}.
\begin{example}
    Let $d = 4$, $n = 8$, $r = 2$, with $q_0 = 11011000$, $q_1 = 00011001$, $q_2 = 10001111$, and $q_3 = 01011011$, as in \cref{examp:1}. Let $k = 2$. After the transformation defined in \cref{equ:phase1-step1}, the state of the first $\ell_d + k$ qubits is
    \begin{align}
        \ket{\psi_1} = \alpha_0 \ket{0}_1 \ket{0}_1\ket{10}_2 + \alpha_1 \ket{0}_1 \ket{0}_1\ket{01}_2 
        + \alpha_2 \ket{1}_1 \ket{0}_1\ket{10}_2 + \alpha_3 \ket{1}_1 \ket{0}_1\ket{01}_2.
    \end{align}
    According to the definition of $\mathcal{S}_j$, we have $\mathcal{S}_0 = \cbra{0, 1}$ and $\mathcal{S}_1 = \cbra{2, 3}$. In the first iteration, after the transformation according to \cref{lemma:encodeunary}, the state of the first $\ell_d + 2k$ qubits is
    \begin{equation}
    \begin{aligned}
        \ket{\psi_2} & = \alpha_0 \ket{00}_2 \ket{00}_2 \ket{10}_2 + \alpha_1 \ket{00}_2 \ket{00}_2\ket{01}_2 \\
        &+ \alpha_2 \ket{1}_1 \ket{0}_1\ket{10}_2\ket{00}_2 + \alpha_3 \ket{1}_1 \ket{0}_1\ket{01}_2\ket{00}_2.
    \end{aligned}
    \end{equation}
    Then apply a circuit according to \cref{lemma:u2nr} for $\mathcal{S}_0$, and the state is
    \begin{equation}
    \begin{aligned}
        \ket{\psi_3} & = \alpha_0 \ket{00}_2 \ket{00}_2 \ket{00}_2 \ket{0001\ 0100\ 0010\ 1000}_{16} \\
        &+ \alpha_1 \ket{00}_2 \ket{00}_2\ket{00}_2 \ket{1000\ 0100\ 0010\ 0100}_{16}\\
        &+ \alpha_2 \ket{1}_1 \ket{0}_1\ket{10}_2\ket{00}_2 \ket{0000\ 0000\ 0000\ 0000}_{16} \\&+ \alpha_3 \ket{1}_1 \ket{0}_1\ket{01}_2\ket{00}_2 \ket{0000\ 0000\ 0000\ 0000}_{16}.
    \end{aligned}
    \end{equation}
    In the second iteration, after the transformation defined in \cref{equ:tounary}, we have
    \begin{equation}
    \begin{aligned}
        \ket{\psi_4} & = \alpha_0 \ket{00}_2 \ket{00}_2 \ket{00}_2 \ket{0001\ 0100\ 0010\ 1000}_{16} \\
        &+ \alpha_1 \ket{00}_2 \ket{00}_2\ket{00}_2 \ket{1000\ 0100\ 0010\ 0100}_{16}\\
        &+ \alpha_2 \ket{00}_2 \ket{00}_2 \ket{10}_2 \ket{0000\ 0000\ 0000\ 0000}_{16} \\&+ \alpha_3 \ket{00}_2 \ket{00}_2 \ket{01}_2 \ket{0000\ 0000\ 0000\ 0000}_{16}.
    \end{aligned}
    \end{equation}
    Then apply a circuit according to \cref{lemma:u2nr} for $\mathcal{S}_1$, and we have 
    \begin{equation}
    \begin{aligned}
        \ket{\psi_5} & = \alpha_0 \ket{00}_2 \ket{00}_2 \ket{00}_2 \ket{0001\ 0100\ 0010\ 1000}_{16} \\
        &+ \alpha_1 \ket{00}_2 \ket{00}_2\ket{00}_2 \ket{1000\ 0100\ 0010\ 0100}_{16}\\
        &+ \alpha_2 \ket{00}_2 \ket{00}_2 \ket{00}_2 \ket{0010\ 1000\ 0001\ 0001}_{16} \\&+ \alpha_3 \ket{00}_2 \ket{00}_2 \ket{00}_2 \ket{0100\ 0100\ 0010\ 0001}_{16}.
    \end{aligned}
    \end{equation}
After the iteration and tracing out the first $6$ qubits, we obtain the target state.
\end{example}

Now, we are ready to prove the main theorem.  

\begin{theorem}[Restatement of \cref{thm:tradeoff}]
Any $n$-qubit $d$-sparse quantum state can be prepared by a circuit of depth $\bo{\frac{nd\log m}{m\log m/n}+\log nd}$ and size $\bo{\frac{nd}{\log m/n} + \frac{nd}{\log d}}$, by using  $m \geq 6n$ ancillary qubits.
\end{theorem}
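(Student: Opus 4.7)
The plan is to compose three pieces: (i) a standard dense preparation of the $\ceil{\log d}$-qubit amplitude state $\sum_i \alpha_i \ket{i}_{\ceil{\log d}}$; (ii) an invocation of \cref{lemma:phase1} to reach the $(n,r)$-unary encoded intermediate state of \cref{equ:workflow}; and (iii) a parallel unary-to-binary conversion on each of the $n/r$ blocks of length $2^r$ via \cref{lemma:unary2binary2}. Piece (i) costs $O(d)$ gates and $O(\log d)$ depth by the optimal dense preparation of \cite{sun2023asymptotically, yuan2023optimal}. Piece (iii) runs the $n/r$ conversions concurrently, giving $O(n 2^r/r)$ gates, $O(r)$ depth, and $O(n 2^r/r)$ additional ancillas. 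Correctness is immediate from the two lemmas composed with the standard preparation.

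The real work is selecting $r$ and $k$. Given the ancilla budget $m \geq 6n$, I would set
\[
r \;=\; \min\{\log(m/n),\,\log d\}
\qquad\text{and}\qquad
k \;=\; \min\!\left\{d,\;\Theta\!\left(\tfrac{m\log(m/n)}{n}\right)\right\},
\]
with $r$ rounded down to a divisor of $n$ and $k$ rounded down to a power of two (each costing only a constant factor). With these choices the ancilla count of \cref{lemma:phase1}, namely $\ceil{\log d} + 4k + nk/r + n2^r/r$, is $O(m)$, and adding piece (iii)'s $O(n 2^r/r)$ ancillas still keeps the total within $m$. The additive $O(\log nd)$ term in the theorem absorbs the depth of piece (i), the $O(r) \subseteq O(\log nd)$ depth of piece (iii), and the degenerate regime $k = d$ of \cref{lemma:phase1} in which a single outer iteration of depth $O(\log nd)$ already suffices.

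The main obstacle will be verifying that the bounds of \cref{lemma:phase1} collapse, under these parameter choices, to those in the theorem. For depth, the estimate $O(d \log(nk)/k)$ simplifies to $O(nd \log m/(m \log(m/n)))$ because $nk = m \log(m/n)$, so $\log(nk) = \log m + \log\log(m/n) = \Theta(\log m)$ whenever $m \geq 6n$. For size, the estimate $O(nd/r + d \log d/k)$ yields the $O(nd/\log(m/n))$ term directly; the residual $O(nd \log d/(m \log(m/n)))$ is always dominated by one of the two announced size terms after a short case analysis on whether the cap $r = \log d$ is active. In the capped regime, piece (iii) contributes $O(n 2^r/r) = O(nd/\log d)$ to the size, precisely the second announced term. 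These case analyses, together with the discrete rounding of $r$ and $k$ and the verification that the rounded values remain in the regime permitted by \cref{lemma:phase1}, constitute the only remaining bookkeeping.
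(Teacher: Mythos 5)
Your proposal is correct and follows essentially the same route as the paper: dense preparation of the $\ceil{\log d}$-qubit amplitude state, then \cref{lemma:phase1}, then parallel unary-to-binary conversion via \cref{lemma:unary2binary2}, with the parameter choices $r=\Theta(\log m/n)$ and $k=\Theta(\frac{m}{n}\log m/n)$ matching those in the paper's proof. The only difference is presentational---you fold the large-$m$ regime into the caps $r\le\log d$, $k\le d$, whereas the paper handles $m=\omega(nd/\log d)$ by an explicit case split that simply discards excess ancillas; both yield the same bounds.
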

\begin{proof}

As mentioned before, our algorithm follows the workflow:
\begin{align}
   \sum_{i=0}^{d-1} \alpha_i \ket{i}_{\ceil{\log d}} \xrightarrow[]{\texttt{{Phase 1}} } \sum_{i=0}^{d-1} \alpha_i \rbra*{\bigotimes_{j=0}^{\frac{n}{r}-1} \ket{e_{q_i[jr: (j+1)r]}}_{2^r}} \xrightarrow[]{\texttt{{Phase 2}} } \sum_{i=0}^{d-1} \alpha_i \ket{q_i}_{n}.
\end{align}
So far, we have shown how to implement \texttt{{Phase 1}} and it remains to implement \texttt{{Phase 2}}. Note that $\ket{e_{q_i[jr: (j+1)r]}}$ is the unary encoding of the $r$-bit integer ${q_i[jr: (j+1)r]}$. Thus, \texttt{{Phase 2}} is to apply \cref{lemma:unary2binary2} on all $\cbra{\ket{e_{q_i[jr: (j+1)r]}}}_{0 \leq j \leq \frac{n}{r}-1}$ in parallel.

According to \cref{lemma:phase1}, \texttt{{Phase 1}} could be implemented by a quantum of depth $\bo{\frac{d\log nk}{k}}$ and size $\bo{\frac{nd}{r} + \frac{d\log d}{k}}$. All the qubits involved in \texttt{{Phase 1}} should be considered as the ancillary qubits, which counts to $\ceil{\log d} + 4k + \frac{nk}{r} + \frac{n2^r}{r}$. According to \cref{lemma:unary2binary2}, \texttt{{Phase 2}} can be realized by a circuit of depth $\bo{r}$ and size $\bo{\frac{n2^r}{r}}$, utilizing $\frac{n2^{r+1}}{r}$ ancillary qubits. Including the qubits encoding the $(n, r)$-unary representation, the number of the ancillary qubits involved in \texttt{{Phase 2}} is $\frac{3n2^{r}}{r}$. Therefore, the parameters $k$ and $r$ should satisfy the following inequalities:
    \begin{align}
        m &\geq \ceil{\log d} + 4k + \frac{nk}{r} + \frac{n2^r}{r}, \label{m-ineq1}\\
        m &\geq \frac{3n2^{r}}{r}.\label{m-ineq}
    \end{align}
  
The overall size and depth of the circuit are $\bo{\frac{nd}{r}+ \frac{d\log d}{k}+\frac{n2^r}{r}}$ and $\bo{\frac{d\log nk}{k}+r}$, respectively. 
Let $r = \ta{\log m/n}$ and $k = \ta{\frac{m}{n}\log m/n}$ such that \cref{m-ineq1} and \cref{m-ineq} are satisfied. Note that from \cref{m-ineq1} and \cref{m-ineq}, $m$ is required to satisfy $m\geq 6n$ since $r \geq 1$ and $k \geq 1$. When $ m \in [6n, \bo{\frac{nd}{\log d}}]$, by plugging the expression of $r$ and $k$ into the overall complexities, the circuit size and depth are $\bo{\frac{nd}{\log m/n}}$ and $\bo{\frac{nd\log m}{m\log m/n}}$, respectively, as claimed. When $m = \omega\rbra{\frac{nd}{\log d}}$, we use only $\ta{\frac{nd}{\log d}}$ ancillary qubits, and then the size and depth are $\bo{\frac{nd}{\log d}}$ and $\bo{\log nd}$. Combining the above two cases, we conclude that, for any $m \geq 6n$, there exists a quantum circuit of size $\bo{\frac{nd}{\log m/n} + \frac{nd}{\log d}}$ and depth $\bo{\frac{nd\log m}{m\log m/n}+\log nd}$ for SQSP utilizing $m$ ancillary qubits. 
    
\end{proof}

\subsection{Deferred Proofs}\label{sec:proof}

We first prove \cref{lemma:encodeunary}.
\begin{proof}[Proof of \cref{lemma:encodeunary}]
The transformation can be implemented through the following steps, where we introduce additional $k$ tag qubits to reduce the circuit depth. By tracing out these tag qubits finally, we obtain the transformation described in \cref{equ:tounary} as follows:  
\begin{align}
    &\ket{j}_{{\ell_n}-\ell_k}\ket{0}_{\ell_k} \ket{e_i}_k \underbrace{\ket{0}_k}_{\text{tag qubits}} \ket{0}_k\\
    \xrightarrow[]{} 
    &\ket{j}_{{\ell_n}-\ell_k}\ket{0}_{\ell_k} \ket{e_i}_k \ket{1}_1\ket{0}_{k-1} \ket{0}_k \label{t1}\\
    \xrightarrow[]{} 
    &\ket{j}_{{\ell_n}-\ell_k}\ket{0}_{\ell_k} \ket{e_i}_k \ket{1\dots1}_k \ket{0}_k \label{t2}\\
    \xrightarrow[]{} 
    &\ket{0}_{{\ell_n}} \ket{e_i}_k \ket{1\dots1}_k \ket{0}_k \label{t3}\\
    \xrightarrow[]{} 
    &\ket{0}_{{\ell_n}} \ket{e_i}_k \ket{1\dots1}_k \ket{e_i}_k \label{t4}\\
    \xrightarrow[]{} 
    &\ket{0}_{{\ell_n}} \ket{0}_k \ket{1}_1\ket{0}_{k-1} \ket{e_i}_k \label{t5}\\
    \xrightarrow[]{} 
    &\ket{0}_{{\ell_n}} \ket{0}_k \ket{0}_k \ket{e_i}_k. \label{t6}
\end{align}
\begin{itemize}
    \item In \cref{t1}, we flip a tag qubit when the first ${\ell_n} - \ell_k$ qubits are in state $\ket{j}$, which, therefore, could be realized by a circuit consisting of a multi-control Toffoli gate and at most ${\ell_n} - \ell_k$ NOT gates.  
    \item \cref{t2} can be achieved via a fanout operation.  
    \item \cref{t3} is to uncompute $\ket{j}_{{\ell_n}-\ell_k}$ and could be implemented by a fanout gate with at most ${\ell_n} - \ell_k$ target qubits. The control qubit is any tag qubit and the target qubits correspond to the positions of $1$ in the binary representation of $j$. 
    \item \cref{t4} is executed using $k$ Toffoli gates. For $1 \leq \ell \leq k$, the control qubits for the $\ell$-th Toffoli gate is the $\ell$-th qubit in $\ket{e_i}_k$ and the $\ell$-th tag qubit, and the target qubit is the $\ell$-th qubit in the last $k$ qubits.  
    \item \cref{t5} firstly uncomputes the first $\ket{e_i}_k$ in \cref{t4} by applying $k$ CNOT gates, each with one of the last $k$ qubits as the control and the corresponding qubit in the first $\ket{e_i}_k$ as the target. Secondly, we uncompute the fanout operation, ensuring that only one tag qubit remains in state $\ket{1}$.  
    \item \cref{t6} is to perform $k$ CNOT gates, each with one of the last $k$ qubits as the control qubit and the remaining activated tag qubit as the target qubit.  
\end{itemize}
The above transformation can be efficiently implemented with an equivalent circuit of size $O({\ell_n} - \ell_k + k)$ and depth $O(\log ({\ell_n} - \ell_k) + \log k)$, according to \cref{lemma:parallelcnot}, \cref{lemma:copy} and \cref{lemma:toffoli}. It could be verified that the constructed circuit acts trivially on all basis states corresponding to $\mathcal{S}_{j'}$ for $j' \neq j$, which completes the proof.
\end{proof}

In the following, we prove \cref{lemma:u2nr}.

\begin{proof}[Proof of \cref{lemma:u2nr}]
    Without loss of generality, we assume that $\abs{\mathcal{X}} = k$.
    The transformation will be implemented through the following two steps with $\frac{nk}{r}+k$ ancillary qubits :
    \begin{align}
        \ket{e_i}_k  \ket{0}_{n2^r/r} 
        & \xrightarrow{\texttt{{Step a}}} \ket{e_i}_k \rbra*{\bigotimes_{j=0}^{\frac{n}{r}-1} \ket{e_{x_i[jr: (j+1)r]}}} \\
        & \xrightarrow{\texttt{{Step b}}} \ket{0}_k \rbra*{\bigotimes_{j=0}^{\frac{n}{r}-1} \ket{e_{x_i[jr: (j+1)r]}}}
    \end{align}
    The first register consisting of $k$ qubits is called Register $\texttt{A}$. The second register consisting of $\frac{n2^r}{r}$ qubits is called Register $\texttt{B}$, which is further split into $\frac{n}{r}$ sub-registers $\cbra{\texttt{B}_{j}}_{0 \leq j \leq \frac{n}{r}-1}$, each of size $2^r$.

\paragraph{\texttt{{Step a}}} Let $\text{CNOT}(i, j, \ell)$ denote a CNOT gate with the $i$-th qubit of Register $\texttt{A}$ as the control qubit and the $\ell$-th qubit of Register $\texttt{B}_{j}$ as the target qubit. An example of $\text{CNOT}(i, j, \ell)$ is shown in \cref{fig:sub1}. \texttt{{Step a}} can be realized with a circuit consisting of CNOT gates as follows:
    \begin{align}
        \prod_{i = 0}^{k-1} \prod_{j = 0}^{\frac{n}{r}-1} \text{CNOT}(i, j, x_i[jr: (j+1)r]) = \prod_{j = 0}^{\frac{n}{r}-1} \prod_{\ell=0}^{2^r-1} \prod_{i \colon x_i[jr: (j+1)r] = \ell} \text{CNOT}(i, j, \ell).
    \end{align}
The above equation holds because all the CNOT gates involved commute with each other, since their control qubits are in Register $\texttt{A}$ and are never used as target qubits.

\begin{itemize}
    \item For fixed $j$ and $\ell$, the product $\prod_{i \colon x_i[jr: (j+1)r] = \ell} \text{CNOT}(i, j, \ell)$ contains at most $k$ CNOT gates. These gates have distinct control qubits and a common target qubit, allowing implementation with an equivalent circuit of depth $\bo{\log k}$, according to \cref{lemma:parallelcnot}.
    
    \item  For fixed $ j $,  the CNOT circuits $\prod_{i \colon x_i[jr: (j+1)r] = \ell} \text{CNOT}(i, j, \ell)$ for different values of $\ell$ have different control qubits and different target qubits, and thus can be implemented in parallel. 
    \item The CNOT circuits $\prod_{\ell=0}^{2^r-1} \prod_{i \colon x_i[jr: (j+1)r] = \ell} \text{CNOT}(i, j, \ell)$ for different values of $j$ share the same control qubits in Register $\texttt{A}$. To execute these operations in parallel, we duplicate $\frac{n}{r}$ copies of the content of Register $\texttt{A}$ in the ancillary qubits using a circuit of size $\bo{\frac{nk}{r}}$ and depth $\bo{\log \frac{n}{r}}$, according to \cref{lemma:copy}. Consequently, all CNOT gates in \texttt{{Step a}} can be implemented by a circuit with depth $\bo{\log k}$ and size $\bo{\frac{nk}{r}}$. Finally, we uncompute the ancillary qubits in depth $\bo{\log \frac{n}{r}}$ and size $\bo{\frac{nk}{r}}$. 
\end{itemize}

\texttt{{Step a}} is summarized as follows:
    \begin{align}
        & \ket{e_i}_{\texttt{A}} \ket{0^{\frac{n2^r}{r}}}_{\texttt{B}} \ket{0^{\frac{nk}{r}}}  \notag \\
        \xrightarrow{} & \ket{e_i}_{\texttt{A}}  \ket{0^{\frac{n2^r}{r}}}_{\texttt{B}} \underbrace{\ket{e_i} \cdots \ket{e_i}}_{\text{$\frac{n}{r}$ copies}} & \text{(\cref{lemma:copy}, depth $\bo{\log \frac{n}{r}}$, size $\bo{\frac{nk}{r}}$)}\\
        \xrightarrow{} & \ket{e_i}_{\texttt{A}}   \rbra*{\bigotimes_{j=0}^{\frac{n}{r}-1} \ket{e_{x_i[jr: (j+1)r]}}_{\texttt{B}_{j}}} \ket{e_i} \cdots \ket{e_i} & \text{(\cref{lemma:parallelcnot}, depth $\bo{\log k}$, size $\bo{\frac{nk}{r}}$)}\\
        \xrightarrow{} & \ket{e_i}_{\texttt{A}}  \rbra*{\bigotimes_{j=0}^{\frac{n}{r}-1} \ket{e_{x_i[jr: (j+1)r]}}_{\texttt{B}_{j}}} \ket{0^{\frac{nk}{r}}} & \text{(\cref{lemma:copy}, depth $\bo{\log \frac{n}{r}}$, size $\bo{\frac{nk}{r}}$)}.
    \end{align}

\paragraph{\texttt{{Step b}}}
    For $0 \leq i \leq d - 1$, let $\text{Tof}_i$ be a $\rbra{\frac{n}{r}+1}$-qubit Toffoli gate acting on Register $\texttt{A}$ and Register $\texttt{B}$, and the control qubits are the $x_i[jr: (j+1)r]$-th qubits of register $\texttt{B}_{j}$ with $0 \leq j \leq \frac{n}{r}-1$ and the target qubit is the $i$-th qubit of Register $\texttt{A}$. An example of $\text{Tof}_i$ is shown in \cref{fig:sub2}. Conditioned on the state of Register $\texttt{B}$ encodes the $(n, r)$-unary representation of $x_i$, $\text{Tof}_i$ flips the $i$-th qubit in Register $\texttt{A}$. Therefore, \texttt{{Step b}} can be implemented by $\prod_{i=0}^{k-1}\text{Tof}_i$. According to \cref{lemma:toffoli}, each $\text{Tof}_i$ can be implemented in depth $\bo{\log \frac{n}{r}}$ and size $\bo{\frac{n}{r}}$ with one ancillary qubit. 
    
    Note that each $\text{Tof}_i$ has a different target qubit, but the sets of their control qubits may intersect. Therefore, we duplicate the content of Register $\texttt{B}$ to parallelize the execution of all the Toffoli gates. Suppose the $\ell$-th qubit in Register $\texttt{B}_{j}$ acts as a control qubit for a total of $t_{j,\ell}$ Toffoli gates. We have $t_{j,\ell} = \sum_i \delta_{x_i[jr: (j+1)r] = \ell}$ and $\sum_\ell t_{j,\ell} = k$. For each $j$ and $\ell$, we create $t_{j,\ell}$ copies of the content of the $\ell$-th qubit of Register $\texttt{B}_{j}$ in the ancillary qubits. This step could be achieved by a circuit of depth $\bo{\log k}$ and size $\bo{\frac{nk}{r}}$ according to \cref{lemma:copy}. Now, we can implement all the Toffoli gates $\cbra{\text{Tof}_i}_{0\leq i \leq k-1}$ in parallel using a circuit of depth $\bo{\log \frac{n}{r}}$ and size $\bo{\frac{nk}{r}}$ with additional $k$ ancillary qubits. Finally, we uncompute the ancillary qubits as before. 
    
    In summary, \texttt{{Step a}} and \texttt{{Step b}} can be implemented by a circuit of depth $\bo{\log k + \log \frac{n}{r}} = \bo{\log nk}$ and size $\bo{\frac{nk}{r}}$, using $\frac{nk}{r}+k$ ancillary qubits. It can be verified that if Register $\texttt{A}$ is in the state $\ket{0^k}$, then \texttt{{Step a}} does not alter the state of any register. Additionally, if the integer encoded in Register $\texttt{B}$ is not in $\mathcal{S}_j$ or Register $\texttt{B}$ is in state $\ket{0}_{n2^r/r}$, then \texttt{{Step b}} also does not alter the state of any register. Therefore, the constructed circuit satisfies \cref{equ:cc22} and \cref{equ:cc23}, completing the proof.
\end{proof}

\begin{figure}[htbp]
  \centering
  \resizebox{0.7\textwidth}{!}{
  \begin{subfigure}[b]{0.45\textwidth}
    \centering
    \includegraphics[width=\linewidth, trim=13cm 6.5cm 13cm 4.5cm, clip]{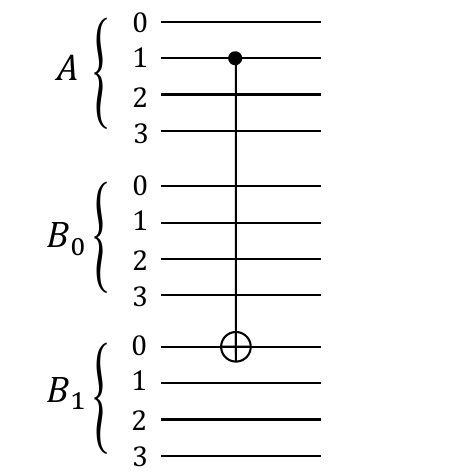}
    \caption{$\text{CNOT}(1, 1, 0)$ with $x_1[2:4] = 00$.}
    \label{fig:sub1}
  \end{subfigure}
  \hfill
  \begin{subfigure}[b]{0.45\textwidth}
    \centering
    \includegraphics[width=\linewidth, trim=13cm 6.5cm 13cm 4.5cm, clip]{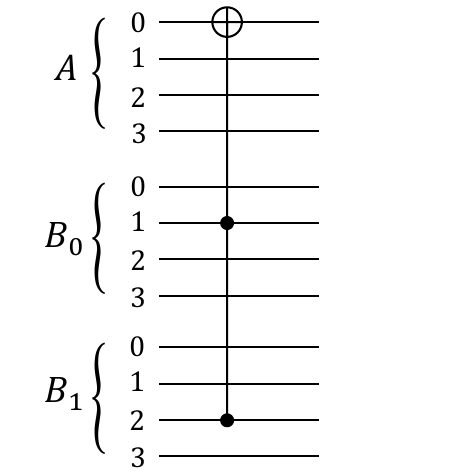}
    \caption{$\text{Tof}_0$ with $x_0 = 0110$.}
    \label{fig:sub2}
  \end{subfigure}
  }
  \caption{Diagrams for $\text{CNOT}(i, j, \ell)$ and $\text{Tof}_i$ with $k = 4, n = 4, r = 2$.}
  \label{fig:combined}
\end{figure}

\section{Conclusion}\label{sec:conclusion}

This paper presents an algorithm for sparse quantum state preparation that optimizes the trade-off between the ancilla count and depth. Given $m$ ancillary qubits, we construct a circuit for SQSP, achieving a circuit size of ${O}(\frac{nd}{\log m/n} + \frac{nd}{\log d})$ and a depth of ${O}(\frac{nd \log m}{m \log m/n} + \log nd)$. When $m = \Theta(\frac{nd}{\log d})$ ancillary qubits are used, our algorithm recovers the optimal circuit depth ${O}(\log nd)$ while reducing the circuit size to ${O}(\frac{nd}{\log d})$. This leads to significant improvements over prior works by reducing both the circuit size and the number of ancillary qubits by a factor of $\log^2 d$.

Our results provide a near-optimal solution for SQSP, with the circuit depth lower bound established as $\Omega\left(\frac{nd}{(m+n)(\log (m+n) + \log d)} + \log nd\right)$. The proposed method demonstrates the potential for optimizing quantum circuits using ancillary qubits, and we believe the techniques can be extended to other quantum circuit synthesis problems, such as the implementation of sparse Boolean functions.

\bibliographystyle{unsrt} 
\bibliography{refs}


\end{document}